\documentclass[11pt]{article}
\usepackage[utf8]{inputenc}
\usepackage{amsmath,amssymb,amsthm}
\usepackage{graphicx}
\usepackage{booktabs}
\usepackage{natbib}
\usepackage[margin=1in]{geometry}
\usepackage{microtype}
\usepackage[hidelinks]{hyperref}
\usepackage{xcolor}
\usepackage[normalem]{ulem}

\newtheorem{proposition}{Proposition}

\theoremstyle{definition}

\DeclareMathOperator{\logit}{logit}
\newcommand{\R}{\mathcal{R}}
\newcommand{\N}{\mathcal{N}}
\newcommand{\ind}[1]{\mathbf{1}\{#1\}}
\newcommand{\D}{\mathcal{D}}
\newcounter{experiment}
\newcommand{\expt}[2]{\par\medskip\refstepcounter{experiment}\label{#2}%
\noindent\textbf{Analysis~\theexperiment\ (#1).}\hspace{0.4em}\ignorespaces}

\title{\bf A Hierarchical Bayesian Model for Selecting Relevant Schema Subgraphs,\\
with an Application to Grounding Large Language Models}
\author{Robert Richardson\\ Brigham Young University}
\date{\today}

\begin{document}
\maketitle

\begin{abstract}
Grounding a large language model on a relational database requires selecting the tables and joins relevant to a query. Existing schema-linking methods usually score tables or columns independently, although the target object is typically a connected subgraph of the foreign-key graph. We propose a hierarchical Bayesian model for schema linking as structured subset selection on small attributed graphs. Unary evidence from query-table and query-column features is combined with a pairwise autologistic coupling on foreign-key edges, allowing weakly-signalled tables to be selected when they connect other relevant tables. Database-level random effects model variation in the baseline inclusion rate and coupling strength, and the small size of the schema graphs permits exact evaluation of the conditional likelihood and inclusion probabilities, which we average over a Laplace approximation to the parameter posterior. On the benchmark datasets BIRD and Spider, positive graph coupling recovers tables missed by independent scoring but shifts the marginal inclusion probabilities upward. Jointly estimating the intercept and coupling restores calibration, and database-level partial pooling adapts the structural effect across schemas. The graph-coupled posterior assigns more probability to the exact relevant subgraph and produces informative predictive sets with near-nominal coverage, whereas predictive sets from the corresponding independent model undercover. The fitted model also reports a posterior over how strongly to weight the graph in each database. Stability and mean-shift results explain when coupling aids recovery and why the intercept must compensate for it. We therefore evaluate the method primarily as a posterior predictive distribution over table subsets rather than as a precision-maximizing selector.
\end{abstract}

\medskip
\noindent\textbf{Keywords:} autologistic model; hierarchical Bayes; structured subset selection; posterior
predictive; calibration; schema linking.

% ---------------------------------------------------------------------------------------------------
% For Bayesian Analysis submission, start from the BA Overleaf template (which supplies ba.cls) and:
%   1. change the class to  \documentclass[ba]{imsart}  (per the current BA template), or \documentclass{ba};
%   2. move the keywords above into the class's \begin{keyword} ... \end{keyword} environment;
%   3. set \bibliographystyle{ba};
%   4. add the BA \startlocaldefs / \endlocaldefs and \arxiv/\volume metadata as the template shows.
% The article-class version here compiles standalone; the switch is mechanical once ba.cls is in the project.
% ---------------------------------------------------------------------------------------------------

\section{Introduction}

A large language model answering a question against a database is only as good as the schema it is shown.
Passing the entire schema wastes context and can degrade the answer, because language models use long contexts
unevenly and are distracted by irrelevant material \citep{liu2024lost}. Passing too little removes the tables
the query needs and produces incorrect or hallucinated SQL. Schema linking, the task of selecting the relevant
part of the schema, is therefore often the difference between a correct and an incorrect query in text-to-SQL,
and in a larger retrieval or agent pipeline a wrong selection is passed on to later steps, where the system has no way to tell how much to trust it and so cannot decide when to stop and check
\citep{kuhn2023}. Because the tables a query reads are usually joined by foreign keys, the relevant part of the
schema is typically a connected subgraph rather than an arbitrary set.

Most schema linkers, including recent uncertainty-aware ones, score each candidate independently and threshold
or rank the scores \citep{evilink,subgraphrag}. This works well when every relevant table is, on its own, clearly
similar to the question, but it misses a common pattern. A join table, or a table that carries a filter the
question only implies, can be relevant while matching the
question only weakly on its own. We call such a table \emph{weakly-signalled}, meaning its table-only evidence is too weak for independent
selection. A \emph{structural connector}, relevant because it joins other clearly relevant tables, is the
important special case. Figure~\ref{fig:concept} illustrates the pattern with a junction
table, \texttt{transactions}, that records purchases. It is the table needed to link the \texttt{customers} and \texttt{products} tables
the question mentions, yet its own name and columns match the question only weakly. Scoring each table on its own
keeps the two endpoints and misses the junction, whereas coupling the selections along the foreign keys lifts
the junction table over threshold and returns the connected subgraph the query actually needs.

\begin{figure}[t]\centering
\includegraphics[width=\linewidth]{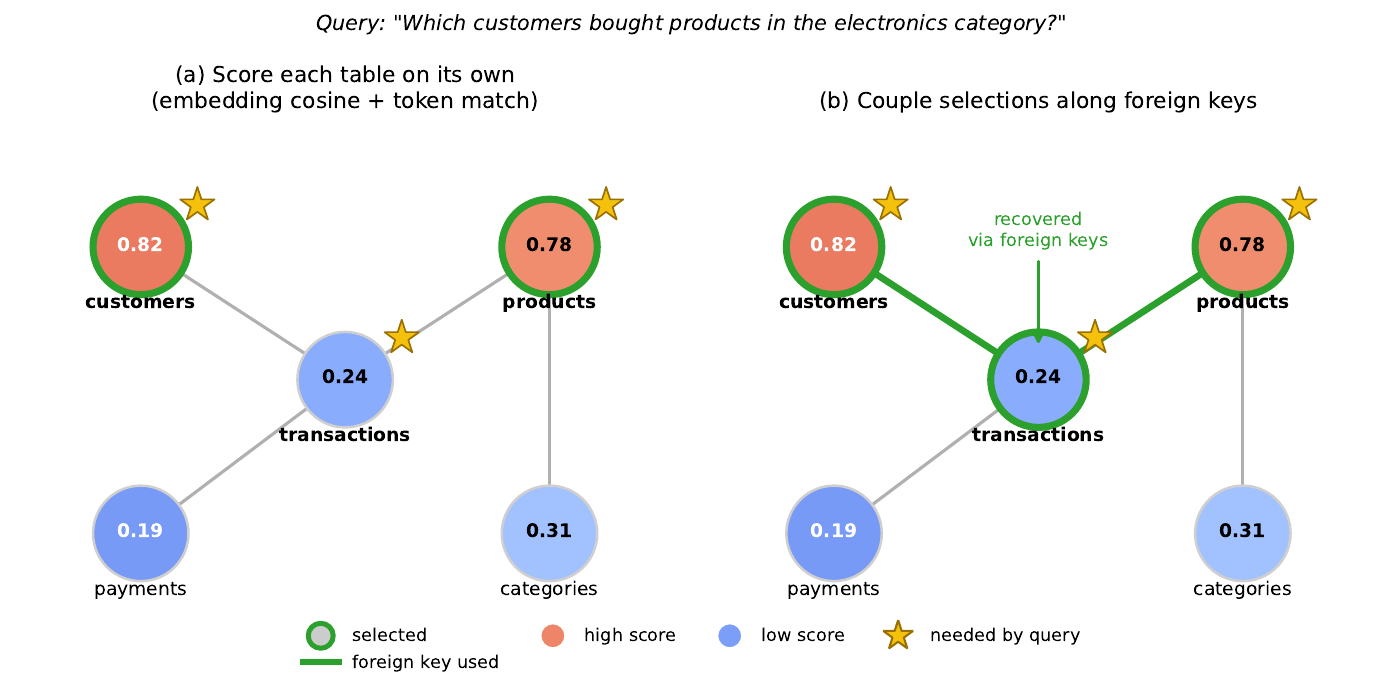}
\caption{Schema linking as subgraph selection. Nodes are tables, and an edge joins two tables related by a
foreign key. Each node's shade and number is the per-table score $\sigma(a_t)$ built from query--table embedding
cosine and token overlap, and
$\star$ marks a table the query needs. (a) Scoring each table on its own selects the two clearly relevant
endpoints but misses the weakly-signalled junction table \texttt{transactions} (score $0.24$). (b) Coupling the
inclusion decisions along the foreign keys lifts the junction table over threshold, because both of its
neighbours are selected, and returns the connected subgraph the query requires.}\label{fig:concept}
\end{figure}

Recent uncertainty-aware methods retain this independent scoring. EviLink \citep{evilink} assigns each item an
independent Beta--Binomial credibility. SubgraphRAG \citep{subgraphrag} scores knowledge-graph triples with
independent per-triple Bernoullis and uses the graph only as an input feature. SchemaGraphSQL
\citep{schemagraphsql} extracts endpoint tables and joins them by deterministic path-finding. Relation-aware
encoders such as RAT-SQL \citep{wang2020ratsql} fold schema structure into the representation but still decode a
single parse. The per-item scores in these methods are built from cheap lexical and embedding signals such as token
overlap and BM25, and cosine similarity between neural text embeddings. Both kinds of signal have long, and
separately, been treated probabilistically. Token-based retrieval is probabilistic at its foundation, from
relevance weighting \citep{robertson1976} and query-likelihood language models \citep{ponte1998} to the
probabilistic relevance framework behind BM25 \citep{robertson2009}, and Bayesian generative models of tokens
such as latent Dirichlet allocation \citep{blei2003}. Neural embeddings, in turn, have been made explicitly
Bayesian or distributional, representing a word as a Gaussian with a learned covariance \citep{vilnis2015}, as a
variational skip-gram posterior \citep{barkan2017,brazinskas2018}, or as a mixture over senses
\citep{athiwaratkun2017}.

We use these signals differently. Rather than making the representation itself Bayesian, we treat the point
embedding and token scores as fixed unary evidence, a per-table score that depends only on that table's own match
to the question, and place a structured Bayesian model on the selected subgraph, coupling the inclusion decisions
along the graph. The graph enters through a pairwise term on the edges that rewards selecting foreign-key-adjacent
tables together (``unary'' and ``pairwise'' are the standard names for the single-node and edge terms of such a
model). The contribution is their use as the unary field of a hierarchical, graph-coupled Bayesian selection
model for schema linking.

Our model is a probability distribution over which subset of tables to select, represented by a binary
indicator on each node of the schema graph. It takes the form of a Markov random field, an autologistic model in
the sense of \citet{besag1974}. Binary inclusion indicators coupled by an Ising or Potts term are a standard device in
Bayesian variable selection \citep{george1993,lizhang2010,stingo2011}, where the coupling is typically fixed to
a moderate value to avoid a phase-transition pathology in which the model size explodes as the coupling is
raised. Closest to our use, \citet{smith2007spatial} estimate rather than fix the coupling of a spatial
selection prior but keep a single global value, \citet{moores2020scalable} give principled inference for the
inverse temperature of a Potts model, again global, and \citet{peterson2015bayesian} let a cross-graph coupling
vary by group-pair with a shared hyper-layer while driving edge selection. We adapt this group-varying idea from
edge selection to node selection, replace the per-pair coupling with an explicit partial-pooling hierarchy on a
scalar coupling, and report the posterior predictive. The calibration behaviour we document also has precedent because
structured Bayesian selection posteriors can be overconfident, as in genetic fine-mapping, where posterior
inclusion probabilities are miscalibrated and credible-set coverage is biased with statistical power
\citep{cui2024finemapping,hutchinson2020}, and where the usual remedy is to enrich the generative model. We
report an analogous phenomenon and trace it to a specific mechanism, assessing it with reliability diagrams, the
expected calibration error \citep{guo2017calibration}, proper scoring rules \citep{brier1950,gneiting2007}, and
post-hoc scaling \citep{platt1999} as a baseline, and we connect it to the broader effort to quantify uncertainty
in language-model outputs \citep{kuhn2023}.

For point accuracy alone, this problem does not need a Bayesian solution, since a tuned shortest-path heuristic recovers a comparable set of tables. The value of the Bayesian model is what it adds beyond a single chosen set.
First, and most importantly, the model returns a calibrated joint distribution over the selected subgraph rather
than a single set (a probability is \emph{calibrated} when it matches the frequency it refers
to, so that among the tables the model calls 70\% likely to be relevant about 70\% truly are). These marginal
probabilities can be checked against empirical frequencies, while the predictive sets the model forms over whole
subgraphs assess calibration of the full selection. When these probabilities are calibrated, a later step in the
pipeline can hold back an answer or ask the user to confirm when the evidence is thin, which a deterministic selector cannot do. Second, the
coupling that makes structure useful is neither known in advance nor the same across databases, and a
hierarchical prior on it lets the model learn how far to trust the graph, borrow strength across databases, and
fall back to a pooled estimate on a database it has never seen. Third, the same coherent model that
produces these probabilities restores their calibration by jointly estimating the field, so the overconfidence that graph coupling induces is corrected
from within the model through its intercept (Proposition~\ref{prop:intercept}), not by a separate post-hoc step.
More broadly, it is a reminder that a fully Bayesian treatment has a natural place inside a modern
language-model pipeline, supplying the calibrated uncertainty these systems usually lack.

We treat schema linking as structured subset selection on a small attributed graph and model it with a
hierarchical Bayesian conditional model. Section \ref{sec:model} places a pairwise Markov random field on the
foreign-key graph, with a unary term built from cheap features and a pairwise term that rewards selecting
foreign-key-adjacent tables together. Database-level random effects let the intercept and coupling vary across
databases, and because schema graphs are small we compute the exact likelihood and report the posterior
predictive distribution. The contribution is threefold.

\begin{enumerate}
\item \textbf{Application.} We give what is, to our knowledge, the first coherent joint Bayesian posterior over
a connectivity-favouring schema subset with edge coupling and calibrated uncertainty. The nearest schema-linking
methods model items independently or perform deterministic path-finding.
\item \textbf{A calibration mechanism.} Graph coupling improves recovery but makes the marginal inclusion
probabilities overconfident. We show this is coupling-induced mean shift: the coupling inflates the expected
selection, and calibration is restored by a compensating decrease in the intercept (Proposition
\ref{prop:intercept}), which is why the fitted intercept is negative, and why temperature scaling, which cannot represent this additive correction, fails here.
A simulation study confirms the pattern persists across the simulated graph regimes.
\item \textbf{A hierarchical coupling.} Letting the coupling vary by database through partial pooling is most
useful when databases are heterogeneous, and it transfers structural benefit to unseen databases. We adapt the
group-varying coupling idea of \citet{peterson2015bayesian} from edge selection to node selection.
\end{enumerate}

A recovery result (Section \ref{sec:theory}) characterises the range of coupling strengths for
which the true subgraph is recovered without over-selection. The method is not proposed as a way to raise
end-to-end execution accuracy, and a deterministic shortest-path baseline attains higher precision at lower
recall. Its value is the posterior itself, a predictive distribution over candidate schema subgraphs.
Section~\ref{sec:experiments} then tests these ideas on the two benchmarks and on synthetic graphs, and
Section~\ref{sec:disc} concludes with a discussion.

\section{Model and inference}\label{sec:model}

Let $G=(V,E)$ be the schema graph, where $V$ indexes the tables and $E$ contains one edge for each foreign-key
relationship between two tables. The graph is read directly from the database catalogue, whose tables are the
nodes and whose declared foreign-key constraints are the edges, so it is given rather than estimated. A selection
is a binary vector $x\in\{0,1\}^{|V|}$, with $x_t=1$ meaning table $t$ is kept. We turn each table's resemblance to the question into a per-table score (Section \ref{sec:evidence}),
assemble those scores into a conditional model over selections that also rewards connectivity (Section
\ref{sec:coupling}), and place priors and database-level random effects on its parameters so that we can report
a posterior predictive distribution (Section \ref{sec:hier}).

\subsection{From text to per-table evidence: embeddings and cosine similarity}\label{sec:evidence}

The evidence that a table is relevant is how well its text matches the question, which we measure in two
complementary ways. An \emph{embedding} is a map from a piece of text to a fixed-length real vector, produced
by a neural network trained on a large corpus so that texts with similar meaning are placed near one another in
the vector space. We use a standard pretrained embedding model and treat it as given, a fixed feature map that
is not estimated here. Concretely we embed the question, and for each table a short description formed from its
name and column names, obtaining vectors in $\mathbb{R}^{d}$ with $d$ of order one thousand. The similarity
between the question vector $u$ and a table vector $v$ is the \emph{cosine similarity}
$\cos(u,v)=\langle u,v\rangle/(\lVert u\rVert\,\lVert v\rVert)$, the cosine of the angle between them. It lies
in $[-1,1]$. A value near one means the two texts point in nearly the same direction and are judged
semantically close. A table whose description is close in meaning to the question receives a high cosine even
when it shares no exact words, for example a question about earnings matching a column named \texttt{salary}.
Embeddings capture meaning but can blur exact matches, so we add lexical features that count shared words. These
are a BM25 score and token overlap between the question and each table's name and columns, together with a
value-match feature that fires when a question word occurs as a stored value in the table.

These embedding and lexical features are combined into a single relevance score $a_t=\theta^\top\phi_t$ per
table, with $\phi_t$ the feature vector and $\theta$ logistic-regression weights, so that
$\sigma(a_t)=1/(1+e^{-a_t})$ estimates the probability that table $t$ is relevant from that table's text alone. We
call this the \emph{unary} evidence, since it uses only information local to a single node, and ranking the
tables by $\sigma(a_t)$ and keeping those above a threshold is the \emph{independent selector}.

The signals above are cheap and standard, but we do not score the tables independently. For the junction table
\texttt{transactions} of Figure~\ref{fig:concept}, $\sigma(a_t)$ is below one half, so the independent selector
discards it, and a unary score cannot separate it from unrelated tables, because it has no way to use the fact
that \texttt{transactions} sits between two tables the question clearly needs. Recovering it requires a model that couples the tables through the graph.

\subsection{Coupling on the graph}\label{sec:coupling}

Relevant tables tend to be joined to one another, and we encode this dependence with a pairwise Markov random
field on the graph. Given parameters $\psi=(\alpha,\gamma,\beta)$, the model over the selection is
\begin{equation}\label{eq:mrf}
p(x\mid\phi,G,\psi)\;=\;\frac{1}{Z(\phi,G,\psi)}\,
\exp\!\Big(\alpha\textstyle\sum_{t} x_t+\gamma\sum_{t} a_t x_t+\beta\sum_{(s,t)\in E} x_s x_t\Big),
\qquad \beta\ge 0 .
\end{equation}
Equation~\eqref{eq:mrf} is a conditional (autologistic) model of the selection given the
evidence $\phi$ and the graph $G$, not a prior on $x$ that is subsequently updated by data, so the graph coupling
is part of the model specification rather than a likelihood update. The first two terms carry the per-table evidence, where
$\gamma$ scales the unary scores and $\alpha$ is an intercept setting how many tables are selected on average.
With $\beta=0$ the tables are conditionally independent, and the marginal inclusion probability is
$\sigma(\alpha+\gamma a_t)$; taking in addition $\alpha=0$ and $\gamma=1$ recovers the independent selector
$\sigma(a_t)$. In the coupling ablations we use the same model with $\beta=0$ as the independent baseline so
that the comparison is exact. The third term
adds $\beta$ to the score of every pair of foreign-key-adjacent tables selected together, so the model favours
selections that are connected on the schema graph.

The log-odds that table $t$ is relevant, given the current
status of the other tables, is $\gamma a_t+\alpha+\beta\sum_{s\in\N(t)}x_s$, the table's own evidence plus
$\beta$ for each already-selected foreign-key neighbour $\N(t)$. A weakly-signalled table, whose own evidence
puts it below one half, can be lifted above the threshold once enough of its neighbours are in. This is not only
a schematic device. Section~\ref{sec:experiments} demonstrates it on a real database, where a
junction table that the cosine ranks below threshold is pulled above it once a foreign-key neighbour is
selected, and the sequential form of the model stops at exactly the gold set of tables (Figure~\ref{fig:trace}).

\paragraph{Connectivity.}
The pairwise term favours connected selections but does not enforce connectedness, so
\eqref{eq:mrf} places some mass on disconnected sets. We do not impose a hard connectivity constraint because the gold table set is already connected in the foreign-key graph in 95\% of BIRD
queries and 93\% of Spider queries, and the fitted posterior's selection is connected 92 to 96\% of the time.
A hard-connected variant, which restricts the enumeration to configurations whose selected set is connected, is
straightforward and at the fitted coupling improves recovery, as Appendix~\ref{app:extra} reports. We keep the
unconstrained model as primary because it can represent the disconnected gold sets that occur in the 5 to 7\% of
queries where the relevant tables are not connected, which the hard constraint cannot.

\subsection{A hierarchical Bayesian model}\label{sec:hier}

\emph{Likelihood.} The data are the gold table sets $x^\star_q$ observed on a set of training queries, each
query $q$ posed against a database $d(q)$ with schema graph $G_d$ and features $\phi_q$. Given parameters
$\Theta$, the selection for query $q$ follows the autologistic model \eqref{eq:mrf}, written per query and with
$\gamma=1$ as
\begin{equation}\label{eq:lik}
p(x_q\mid\phi_q,G_d,\Theta)=\frac{1}{Z(\phi_q,G_d,\Theta)}\exp\!\Big(\sum_t(\alpha_d+\theta^\top\phi_{qt})x_{qt}
+\beta_d\!\!\sum_{(s,t)\in E_d}\!\!x_{qs}x_{qt}\Big).
\end{equation}
Fixing $\gamma=1$ removes the scale non-identifiability between $\theta$ and $\gamma$, since multiplying $\theta$
by $c$ and dividing $\gamma$ by $c$ leaves the model unchanged, so the natural parameters are
$(\theta,\alpha_d,\beta_d)$. Queries are conditionally independent given $\Theta$, and the likelihood of the
observed selections is $\prod_q p(x^\star_q\mid\phi_q,G_{d(q)},\Theta)$.

\emph{Prior.} The unary weights $\theta$ are shared across databases, while the intercept and coupling vary by
database through non-centred random effects,
\begin{align}\label{eq:prior}
&\theta\sim\N(0,4I), \qquad
\alpha_d=\alpha_0+\sigma_\alpha z^\alpha_d, \qquad
\beta_d=\operatorname{softplus}(b_0+\sigma_\beta z^\beta_d),\notag\\
&(z^\alpha_d,z^\beta_d)\sim\N(0,I_2), \qquad
\alpha_0,b_0\sim\N(0,9), \qquad
\sigma_\alpha,\sigma_\beta\sim\text{Half-Normal}(1),
\end{align}
which lets densely and sparsely connected schemas take different couplings while borrowing strength through the
shared hyperparameters $(\alpha_0,b_0,\sigma_\alpha,\sigma_\beta)$.

\emph{Posterior.} Writing $\Theta=(\theta,\alpha_0,b_0,\sigma_\alpha,\sigma_\beta,\{z_d\})$, Bayes' theorem gives
\begin{equation}\label{eq:post}
p(\Theta\mid\D)\ \propto\ \Big[\textstyle\prod_q p(x^\star_q\mid\phi_q,G_{d(q)},\Theta)\Big]\,p(\Theta).
\end{equation}
Because the schema graphs are small, at most fourteen tables here, the normaliser $Z$ in \eqref{eq:lik} is
evaluated exactly by enumerating all $2^{|V|}$ configurations, so \eqref{eq:post} is a posterior under an exactly
computed conditional likelihood rather than a pseudo-likelihood.

\emph{Posterior predictive.} For a new query the object of interest is the posterior predictive distribution over
its selection,
\begin{equation}\label{eq:predictive}
p(x_{\mathrm{new}}\mid\phi_{\mathrm{new}},G_{\mathrm{new}},\D)=\int p(x_{\mathrm{new}}\mid\phi_{\mathrm{new}},
G_{\mathrm{new}},\Theta)\,p(\Theta\mid\D)\,d\Theta,
\end{equation}
whose per-table marginal $m_t=\int\mathbb{E}[x_t\mid\phi,G,\Theta]\,p(\Theta\mid\D)\,d\Theta$ we report alongside
it. Parameter and between-database uncertainty are integrated out rather than plugged in. The plug-in field
methods compared later instead hold $\theta$ at a cross-validated logistic fit and estimate a scalar evidence
multiplier $\gamma$, an identifiable point-estimate approximation of the same model.

The integrals in \eqref{eq:predictive} are the only quantities we approximate. We fit the maximum a posteriori
$\Theta$ by gradient ascent and form a Laplace posterior $\N(\hat\Theta, H^{-1})$ from the Hessian of the
log-posterior. Conditional on each draw of $\Theta$ the inner expectation in \eqref{eq:predictive} is again an
exact enumeration, so the predictive preserves the computational advantage, and averaging over a handful of
Laplace draws approximates the outer integral over $\Theta$. For a database with no training queries we distinguish a
population-level plug-in prediction, obtained by setting $z_d=0$, from a full predictive calculation that
integrates over $z_d\sim\N(0,I)$, comparing both in Section
\ref{sec:experiments}. Alongside the joint posterior we also apply a greedy conditional-threshold reading of
\eqref{eq:mrf} for visualisation. Starting from the empty selection, it repeatedly commits the uncommitted table of highest full
conditional probability, updates the conditionals of its neighbours through \eqref{eq:mrf}, and stops when no
uncommitted table exceeds one half, never removing a committed table and breaking ties by unary score. Because
the coupling only raises a table's conditional as neighbours commit, this add-only procedure terminates at a
stable selection and yields the belief trace of Figure \ref{fig:trace} and a variable-size stopping rule.

\paragraph{Implementation.}
The embeddings are from a standard pretrained model (OpenAI \texttt{text-embedding-\allowbreak 3-small}, dimension $1536$),
with each table described by the string ``name: column$_1$, column$_2$, \dots'' and the question by its raw
text. The six features in $\phi_t$ are the query--table cosine, the maximum query--column cosine over the
table's columns, a BM25 score, a token-overlap fraction against the table's name, a token-overlap fraction
against its columns, and a value-match fraction that fires when a question token appears among the table's stored
values, from a bounded scan of up to $500$ rows of the training databases. Features are standardised before the logistic fit. In the full Bayesian model
$\theta$ carries no separate intercept, since $\alpha_d$ is the model intercept. The fixed logistic unary used by
the independent and coupled baselines includes its own intercept inside $a_t$, so those baselines keep a base-rate
term even when the field intercept $\alpha$ is set to zero. The priors are
$\theta_j\sim\N(0,4)$, $\alpha_0,b_0\sim\N(0,9)$, and $\sigma_\alpha,\sigma_\beta\sim\text{Half-Normal}(1)$, with
$\operatorname{softplus}(x)=\log(1+e^x)$. The Laplace approximation is formed in unconstrained coordinates, with
the positive scales $\sigma_\alpha,\sigma_\beta$ on the log scale, and the posterior predictive uses $200$ draws, a count validated in Appendix~\ref{app:checks}.

\section{Recovery guarantees}\label{sec:theory}

The following result characterises the couplings for which the gold set is a single-flip local optimum. A
stable selection is a single-flip local optimum of \eqref{eq:mrf}, meaning flipping any one inclusion indicator, from include to exclude or vice versa, worsens the likelihood. This is a local-stability result, and Proposition~\ref{prop:map} addresses the global mode.

Write $\kappa=\logit(1/2)=0$ as the cut-off for inclusion based on log-odds. Partition the relevant set $\R$ into salient tables (unary log-odds above
threshold, kept by the independent selector) and weakly-signalled tables (below threshold, missed). For a
weakly-signalled table $b$ let $\delta_b>0$ be its deficit and $k_b\ge1$ its number of relevant neighbours. For
an irrelevant table $r$ let $\mu_r>0$ be its margin and $k'_r$ its number of relevant neighbours. The terms
$\alpha$ and $\gamma$ are absorbed into the $a_t$.

\begin{proposition}[Local Stability]\label{prop:window}
The true selection $\R$ is stable if and only if
\[
\max_{b}\frac{\delta_b}{k_b}\;\le\;\beta\;<\;\min_{r:\,k'_r\ge1}\frac{\mu_r}{k'_r}.
\]
Such a $\beta$ exists if and only if $\max_b \delta_b/k_b<\min_r \mu_r/k'_r$. The independent selector
($\beta=0$) excludes every weakly-signalled relevant table, so whenever one is present it cannot be recovered
without coupling.
\end{proposition}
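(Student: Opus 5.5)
The plan is to reduce stability of $\R$ to a family of single-node inequalities, using the fact that flipping one indicator $x_t$ changes the exponent of \eqref{eq:mrf} by exactly the conditional log-odds of $t$ given the rest. With $\alpha,\gamma$ absorbed into $a_t$, the exponent changes by $\Delta_t = a_t + \beta\sum_{s\in\N(t)} x_s$ when $x_t$ goes from $0$ to $1$. Here the other indicators are held at the configuration $\R$, and the normaliser $Z$ does not change.

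So $\R$ is a single-flip local optimum exactly when two conditions hold. For every $t\in\R$ we need $\Delta_t \ge \kappa = 0$, since removing $t$ must not help. For every $t\notin\R$ we need $\Delta_t < 0$, since adding $t$ must strictly worsen things. I will use the convention that ties favour the current selection, and this convention is what produces the asymmetric $\le$ and $<$ in the statement. I will state it explicitly, because it is the only delicate point about the inequalities. At $x=\R$ the neighbour sum for any $t$ is the number of its neighbours lying in $\R$.

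Next I would split $V$ into the three classes and read off each condition.

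\textbf{Salient $t\in\R$.} We have $a_t\ge 0$ and $\beta\ge0$, so $\Delta_t\ge0$ holds automatically and imposes no constraint.

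\textbf{Weakly-signalled $b$.} We have $a_b=-\delta_b$, so the condition reads $-\delta_b+\beta k_b\ge0$, that is, $\beta\ge \delta_b/k_b$. This uses $k_b\ge1$. I would note that if some weak $b$ had $k_b=0$, then $\R$ could never be stable, which is why the hypothesis $k_b\ge1$ is assumed.

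\textbf{Irrelevant $r$.} We have $a_r=-\mu_r$, so the condition is $-\mu_r+\beta k'_r<0$. If $k'_r=0$ this holds automatically since $\mu_r>0$. If $k'_r\ge1$ it is equivalent to $\beta<\mu_r/k'_r$.

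Taking the maximum over $b$ and the minimum over $r$ gives the two-sided window. Because every step above is an equivalence, this proves the "if and only if".

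The existence claim is immediate. The set of admissible $\beta$ is the interval $[\max_b\delta_b/k_b,\ \min_r\mu_r/k'_r)$. This is nonempty exactly when the left endpoint is strictly below the right one, and the left endpoint is nonnegative, so it is compatible with $\beta\ge0$.

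I would handle the empty-set conventions separately. If there are no weak tables, the max is taken to be $0$. If no irrelevant table has a relevant neighbour, the min is $+\infty$.

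For the final claim, set $\beta=0$. Then $\Delta_b=-\delta_b<0$ for every weakly-signalled $b$, so the conditional probability of each such table is below one half and the independent selector excludes it. Equivalently, the lower bound $\beta\ge\delta_b/k_b>0$ fails at $\beta=0$, so $\R$ is not stable without coupling.

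The main obstacle is not computational. It is making the correspondence between "stable" and "the flip does not increase the unnormalised log-probability" precise, including the tie-breaking convention at equality. This has to be done so that the closed lower endpoint and open upper endpoint come out as stated.
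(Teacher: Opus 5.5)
Your argument is correct and is essentially the paper's proof. At $x=\R$ each table's conditional log-odds is $a_t+\beta\,|\N(t)\cap\R|$; salient tables impose no constraint, weakly-signalled tables require $\beta\ge\delta_b/k_b$, and irrelevant tables with $k'_r\ge1$ require $\beta<\mu_r/k'_r$, so intersecting these gives the window.

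The one slip is the convention you name. ``Ties favour the current selection'' would keep an irrelevant table out when $\Delta_r=0$, which gives the closed endpoint $\beta\le\mu_r/k'_r$. The asymmetric window comes instead from the paper's conditional-threshold rule: include $t$ iff its conditional log-odds is at least $\kappa=0$, so ties favour inclusion. That rule is exactly the pair of conditions you actually use, $\Delta_t\ge0$ on $\R$ and $\Delta_t<0$ off $\R$.
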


\begin{proof}
At $x=\ind{\cdot\in\R}$ the conditional log-odds of table $t$ is $a_t+\beta\,|\N(t)\cap\R|$. A salient table
has $a_t\ge\kappa$ and stays in for all $\beta\ge0$. A weakly-signalled $b$ has conditional log-odds
$\kappa-\delta_b+\beta k_b$, at least $\kappa$ exactly when $\beta\ge\delta_b/k_b$. An irrelevant $r$ has
$\kappa-\mu_r+\beta k'_r$, below $\kappa$ exactly when $\beta<\mu_r/k'_r$ (automatic if $k'_r=0$). Intersecting
gives the window. At $\beta=0$ each weakly-signalled table has log-odds $a_b<\kappa$ and is excluded.
\end{proof}

The two bounds have a simple reading. The lower bound is set by the relevant table with the largest unary deficit
per relevant neighbour ($\max_b\delta_b/k_b$), the coupling needed to rescue the hardest weak table. The upper
bound is set by the irrelevant table with the smallest margin per relevant neighbour ($\min_r\mu_r/k'_r$), above
which the most vulnerable distractor enters. A useful coupling exists only when the former is below the latter.
For the junction table of Figure \ref{fig:concept}, with
unary probability $0.24$ so deficit $\delta\approx1.15$ and $k=2$ relevant neighbours, the lower bound is
$\delta/k\approx0.58$, while a distractor with margin $\mu=1$ and one spurious neighbour is admitted only above
$\beta=1$, so any coupling between about $0.58$ and $1$ recovers the junction without the distractor.

The lower bound is vacuous when there is no weakly-signalled table, and any $\beta>0$ can only
risk over-selection. If such a table has no relevant neighbour ($k_b=0$) no finite coupling recovers it. The
window can also be empty. When $\max_b\delta_b/k_b\ge\min_r\mu_r/k'_r$, any coupling strong enough to rescue the
hardest weak table has already admitted some distractor. Finally, even in a non-empty window the proposition only
says $\R$ is stable, a fixed point of the conditional-threshold rule. It does not say that a greedy procedure
started from the salient tables reaches $\R$, since two weak tables can support each other once both are in yet
leave neither able to enter first. The result also assumes each distractor starts below threshold. A distractor
the unary model already accepts ($a_r>0$) cannot be corrected by coupling; it is included on its own unary
evidence, regardless of the coupling.

Proposition \ref{prop:window} establishes only single-coordinate stability. A configuration can pass that test and still be beaten by a coordinated move that changes
several tables at once, because a group of distractors can reward one another through the edges among them. The
next result gives conditions for $\R$ to be the global posterior mode, beating every competitor $S$, which we
call global stability in contrast to the local stability of Proposition \ref{prop:window}.

\begin{proposition}[Global mode and a posterior-mass bound]\label{prop:map}
Write $k_t=|\N(t)\cap\R|$, let $d_r$ be the degree of node $r$, and let $e(\cdot)$ count the edges internal to a
set. Write any competitor as $S=(\R\setminus D)\cup A$, a mistake consisting of removed relevant tables
$D\subseteq\R$ and added distractors $A\subseteq V\setminus\R$. The true selection $\R$ is the global posterior
mode, $p(\R)\ge p(S)$ for every $S$, if and only if
\begin{align}
\sum_{r\in A}(a_r+\beta k_r)+\beta\,e(A)\ \le\ 0 &\qquad\text{for every } A\subseteq V\setminus\R,\label{eq:noover}\\
\sum_{t\in D}(a_t+\beta k_t)\ \ge\ \beta\,e(D) &\qquad\text{for every } D\subseteq\R.\label{eq:nounder}
\end{align}
If moreover these hold with a set-wise margin $\varepsilon>0$, that is
$\sum_{r\in A}(a_r+\beta k_r)+\beta e(A)\le-\varepsilon|A|$ and $\sum_{t\in D}(a_t+\beta k_t)-\beta e(D)\ge
\varepsilon|D|$ for all $A$ and $D$, then $\log\{p(\R)/p(S)\}\ge\varepsilon\,d_H(\R,S)$ for every $S$, where
$d_H$ is the Hamming distance, and hence $p(\R\mid\phi,G)\ge(1+e^{-\varepsilon})^{-|V|}$, which tends to one
whenever $\varepsilon-\log|V|\to\infty$.
\end{proposition}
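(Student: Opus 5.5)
The plan is to work directly with the unnormalised log-density of \eqref{eq:mrf}, with $\alpha,\gamma$ absorbed into the $a_t$:
\[
H(x)=\sum_t a_tx_t+\beta\sum_{(s,t)\in E}x_sx_t .
\]
Since $\log\{p(\R)/p(S)\}=H(\R)-H(S)$, everything reduces to computing this difference exactly for a competitor $S=(\R\setminus D)\cup A$. I would decompose the edges by endpoint location, writing $e(X,Y)$ for the number of edges between disjoint sets $X$ and $Y$. On the $\R$ side, $e(\R)=e(\R\setminus D)+e(D)+e(D,\R\setminus D)$. On the $S$ side, $e(S)=e(\R\setminus D)+e(A)+e(A,\R\setminus D)$.

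Next I would rewrite the cross terms through the relevant-neighbour counts $k_t$. For $t\in D$, summing $k_t$ counts each edge inside $D$ twice and each edge from $D$ to $\R\setminus D$ once, so
\[
\textstyle\sum_{t\in D}k_t=2e(D)+e(D,\R\setminus D).
\]
For $r\in A$, the relevant neighbours lie either in $\R\setminus D$ or in $D$, so $e(A,\R\setminus D)=\sum_{r\in A}k_r-e(A,D)$. Substituting both identities, the common term $e(\R\setminus D)$ cancels and I expect the exact identity
\[
H(\R)-H(S)=\Big[\textstyle\sum_{t\in D}(a_t+\beta k_t)-\beta e(D)\Big]-\Big[\textstyle\sum_{r\in A}(a_r+\beta k_r)+\beta e(A)\Big]+\beta\,e(A,D).
\]
This bookkeeping is the main step where care is needed, in particular getting the double-counting of $e(D)$ right. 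Everything else follows from this identity.

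For sufficiency, \eqref{eq:nounder} makes the first bracket nonnegative and \eqref{eq:noover} makes the second bracket nonpositive. Because $\beta\ge0$, the cross term $\beta e(A,D)$ is also nonnegative, so $H(\R)\ge H(S)$. For necessity, I take $D=\emptyset$, where the difference is exactly minus the left side of \eqref{eq:noover}. I then take $A=\emptyset$, where the difference is exactly the left side minus the right side of \eqref{eq:nounder}. In both cases the cross term vanishes, so both conditions are forced.

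Under the set-wise margin, the same identity gives
\[
H(\R)-H(S)\ge\varepsilon|D|+\varepsilon|A|=\varepsilon\,d_H(\R,S).
\]
For the mass bound I would write
\[
1/p(\R)=\sum_S e^{-(H(\R)-H(S))}\le\sum_S e^{-\varepsilon d_H(\R,S)}.
\]
Since $S$ ranges over all of $\{0,1\}^{|V|}$, the last sum factorises over coordinates as $(1+e^{-\varepsilon})^{|V|}$, which gives $p(\R\mid\phi,G)\ge(1+e^{-\varepsilon})^{-|V|}$. Finally, using $\log(1+u)\le u$, the bound is at least $\exp(-|V|e^{-\varepsilon})=\exp(-e^{-(\varepsilon-\log|V|)})$, which tends to one when $\varepsilon-\log|V|\to\infty$.
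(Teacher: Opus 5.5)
Your proof is correct and follows the paper's argument. Both use the same edge bookkeeping for $H(\R)-H(S)$, and your exact cross term $\beta\,e(A,D)\ge 0$ plays the role of the paper's bound $\mathrm{cut}(A,\R\setminus D)\le\sum_{r\in A}k_r$; necessity comes from the same choices $D=\varnothing$ and $A=\varnothing$, and the mass bound from the same summation over Hamming shells, with your $\log(1+u)\le u$ step making explicit the limit the paper leaves unstated.
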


\begin{proof}
Write $S=(\R\setminus D)\cup A$ with $D=\R\setminus S$ and $A=S\setminus\R$, and $H(x)=\sum_t a_t x_t+\beta
\sum_{(s,t)\in E}x_s x_t$. Expanding the edge counts gives
\[
H(\R)-H(S)=\Big[\textstyle\sum_{t\in D}(a_t+\beta k_t)-\beta e(D)\Big]-\Big[\textstyle\sum_{r\in A}a_r+\beta e(A)
+\beta\,\mathrm{cut}(A,\R\setminus D)\Big],
\]
where the removal term uses $e(D)+\mathrm{cut}(D,\R\setminus D)=\sum_{t\in D}k_t-e(D)$. The removal term is
nonnegative by \eqref{eq:nounder}. Since $\mathrm{cut}(A,\R\setminus D)\le\mathrm{cut}(A,\R)=\sum_{r\in A}k_r$,
the addition term is at most $\sum_{r\in A}(a_r+\beta k_r)+\beta e(A)\le0$ by \eqref{eq:noover}, so
$H(\R)-H(S)\ge0$. Conversely, if $\R$ is the global mode, comparing it with $\R\cup A$ (the case $D=\varnothing$,
for which $\mathrm{cut}(A,\R\setminus D)=\mathrm{cut}(A,\R)$) gives \eqref{eq:noover}, and comparing it with
$\R\setminus D$ (the case $A=\varnothing$) gives \eqref{eq:nounder}, so the two conditions are also necessary.
Under the set-wise margins the removal bracket is at least $\varepsilon|D|$ and the addition
bracket at most $-\varepsilon|A|$, giving $H(\R)-H(S)\ge\varepsilon\,d_H(\R,S)$. Summing $p(S)/p(\R)=e^{-(H(\R)-H(S))}$ over
Hamming shells, $\sum_{S\ne\R}e^{-\varepsilon d_H}\le\sum_{k\ge1}\binom{|V|}{k}e^{-\varepsilon k}=(1+e^{-\varepsilon})^{|V|}-1$,
and $p(\R)\ge(1+e^{-\varepsilon})^{-|V|}$.
\end{proof}

Condition \eqref{eq:noover} is stronger than the single-flip bound of Proposition \ref{prop:window} because of
the internal term $\beta\,e(A)$. Bounding $e(A)\le\tfrac12\sum_{r\in A}(d_r-k_r)$ gives the interpretable
sufficient form
\[
\beta\ \le\ \min_{r\notin\R}\frac{2\mu_r}{d_r+k_r},
\]
which for $k_r>0$ is the single-flip bound $\mu_r/k_r$ multiplied by $2k_r/(d_r+k_r)\le1$. Recovering $\R$ as the
global mode, rather than merely a stable configuration, therefore requires a strictly smaller coupling whenever a
distractor has neighbours outside $\R$, because a connected clump of individually-subthreshold distractors is
jointly attractive through its internal coupling. The effect is larger for a distractor with no relevant
neighbour ($k_r=0$), where the single-flip condition imposes no restriction at all, yet a connected clump of
such distractors can still become jointly favourable, which \eqref{eq:noover} rules out but the single-flip
condition cannot. Condition \eqref{eq:nounder} reduces to the bridge-recovery
bound $\beta\ge\max_b\delta_b/k_b$ of Proposition \ref{prop:window} when the weakly-signalled tables form an
independent set adjacent only to salient tables, the junction-table structure of Figure \ref{fig:concept}. We can see this difference in a simple example: two distractors each joined to one relevant table and to each other
($\mu=1$, $k=1$, $d=2$) are excluded by every single flip up to $\beta=1$, yet the exact global mode includes
both as soon as $\beta>2/3=2\mu/(d+k)$.

These conditions are the group-level analogues of the bounds in Proposition \ref{prop:window}. The mass bound
adds that when they hold with room to spare, every wrong table costs at least $\varepsilon$ of log-posterior
score, so the posterior cannot spread much of its mass over incorrect selections.

\subsection{Why coupling needs a compensating intercept}\label{sec:intercept}

Propositions \ref{prop:window} and \ref{prop:map} concern which tables are selected. The last result concerns
the marginal inclusion probabilities, which positive coupling changes even when the unary field is calibrated. By
rewarding every pair of selected neighbours, positive coupling pushes the model toward selecting more tables and
raises the inclusion probability of any table that has selected neighbours. If the unary field was already calibrated, so that $\sigma(a_t)$ matched the rate at which
such tables are truly relevant, then adding coupling on top of it inflates these probabilities above what the
evidence supports, and the marginal inclusion probabilities become overconfident. Because the method is evaluated probabilistically,
this shift affects calibration and not only the ranking. A downstream system reads the marginals as the chance
that a table is needed, so if they are systematically too high it will keep tables it should drop and abstain too
rarely. Holding the expected selection size fixed therefore requires lowering the intercept
$\alpha$, which charges a cost for every selected table and undoes the inflation. The next result makes this
precise. Write $S(x)=\sum_t x_t$ for the number of selected
tables and $T(x)=\sum_{(s,t)\in E}x_s x_t$ for the number of selected adjacent pairs, the two sufficient
statistics paired with $\alpha$ and $\beta$ in \eqref{eq:mrf}.

\begin{proposition}[Mean shift and intercept compensation]\label{prop:intercept}
For the ferromagnetic model \eqref{eq:mrf} with $\beta\ge0$,
\[
\frac{\partial\,\mathbb{E}[S]}{\partial\beta}=\operatorname{Cov}(S,T)\ge 0,
\qquad\text{and}\qquad
\frac{d\alpha}{d\beta}\Big|_{\mathbb{E}[S]\ \mathrm{fixed}}=-\,\frac{\operatorname{Cov}(S,T)}{\operatorname{Var}(S)}\le 0 .
\]
That is, increasing the coupling increases the expected number of selected tables, and holding the expected
selection size fixed requires the intercept to decrease. Moreover $\partial\,\mathbb{E}[x_t]/\partial\beta=
\operatorname{Cov}(x_t,T)\ge0$ for every table $t$, so each individual marginal inclusion probability is
non-decreasing in the coupling.
\end{proposition}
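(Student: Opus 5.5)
The argument uses the standard exponential-family structure of \eqref{eq:mrf}. Writing the model as $p(x)\propto\exp\{\alpha S(x)+\sum_t\gamma a_tx_t+\beta T(x)\}$, the log-partition function $\log Z$ is the cumulant generating function of the sufficient statistics. Differentiating an expectation of any function $f(x)$ with respect to a natural parameter therefore gives a covariance with the paired statistic: $\partial_\beta\mathbb{E}[f]=\operatorname{Cov}(f,T)$ and $\partial_\alpha\mathbb{E}[f]=\operatorname{Cov}(f,S)$. This is immediate because the sum over $2^{|V|}$ configurations is finite, so differentiation under the sum is trivially justified. Taking $f=S$ gives $\partial\mathbb{E}[S]/\partial\beta=\operatorname{Cov}(S,T)$, and taking $f=x_t$ gives $\partial\mathbb{E}[x_t]/\partial\beta=\operatorname{Cov}(x_t,T)$.

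The intercept formula then follows from the implicit function theorem applied to $F(\alpha,\beta)=\mathbb{E}_{\alpha,\beta}[S]-c$. We have $\partial_\alpha F=\operatorname{Var}(S)$, which is strictly positive because every configuration has positive probability and $S$ is non-constant whenever $|V|\ge1$. Hence $d\alpha/d\beta=-\partial_\beta F/\partial_\alpha F=-\operatorname{Cov}(S,T)/\operatorname{Var}(S)$. Its sign is opposite to that of $\operatorname{Cov}(S,T)$.

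The main step is the sign claim, $\operatorname{Cov}(x_t,T)\ge0$ and $\operatorname{Cov}(S,T)\ge0$. I would use the FKG inequality on the distributive lattice $\{0,1\}^{|V|}$. The measure $p$ is strictly positive, and its log-density $\alpha S+\sum\gamma a_tx_t+\beta T$ is supermodular when $\beta\ge0$. The linear terms are modular, and each $x_sx_t$ is supermodular, since $\min(x,y)$ and $\max(x,y)$ satisfy $(x\wedge y)_s(x\wedge y)_t+(x\vee y)_s(x\vee y)_t\ge x_sx_t+y_sy_t$; a short case check on two coordinates verifies this. Hence $p(x\wedge y)p(x\vee y)\ge p(x)p(y)$, which is the FKG lattice condition. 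The functions $x_t$, $S$ and $T$ are all coordinatewise non-decreasing, so FKG yields nonnegative covariance for each pair. Because $S=\sum_tx_t$, the claim for $S$ also follows from the one for each $x_t$ by summing.

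I expect the only real obstacle to be stating the lattice-condition verification cleanly. The alternative would be a Griffiths/GKS-type argument, but FKG is the natural citation. It requires only $\beta\ge0$, the ferromagnetic assumption, and places no restriction on the sign of the fields $\alpha+\gamma a_t$. That generality matters here, since the intercept compensation drives $\alpha$ negative.
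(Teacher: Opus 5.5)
Your proposal is correct and follows the paper's proof: the exponential-family identities $\partial_\beta\mathbb{E}[f]=\operatorname{Cov}(f,T)$ for $f=S$ and $f=x_t$, implicit differentiation of the constraint $\mathbb{E}[S]=c$, and the FKG inequality for the ferromagnetic model to sign the covariances. Your explicit checks fill in details the paper leaves to the citation: that $\operatorname{Var}(S)>0$, so the implicit-function step is legitimate, and that $\beta\ge0$ makes the log-density supermodular, so the FKG lattice condition holds whatever the sign of $\alpha+\gamma a_t$.
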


\begin{proof}
The model \eqref{eq:mrf} is an exponential family with natural parameters $(\alpha,\beta)$ (holding
$\gamma,\phi,G$ fixed) and sufficient statistics $(S,T)$. Hence $\mathbb{E}[S]=\partial\log Z/\partial\alpha$
and $\partial\mathbb{E}[S]/\partial\beta=\partial^2\log Z/\partial\alpha\partial\beta=\operatorname{Cov}(S,T)$,
which is non-negative because $S$ and $T$ are both non-decreasing in $x$ and the attractive (ferromagnetic)
Ising model is positively associated \citep[the FKG inequality,][]{fortuin1971}. Differentiating the constraint
$\mathbb{E}[S]=c$ gives $\operatorname{Var}(S)\,d\alpha+\operatorname{Cov}(S,T)\,d\beta=0$, since
$\partial\mathbb{E}[S]/\partial\alpha=\operatorname{Var}(S)$, and rearranging yields the second identity. The same
argument applied to the indicator $x_t$ in place of $S$ gives
$\partial\mathbb{E}[x_t]/\partial\beta=\operatorname{Cov}(x_t,T)\ge0$, since $x_t$ and $T$ are both non-decreasing
in $x$.
\end{proof}

The proposition shows that, at a fixed intercept, increasing $\beta$ shifts the marginal inclusion probabilities
upward. Temperature scaling multiplies the logits and preserves the zero-logit boundary, whereas the required
correction is an additive intercept shift, so temperature cannot in general reproduce it, and on our data
it fails to remove the resulting overconfidence. The correction is a decrease in the intercept, which is exactly
the negative intercept the fit recovers. On our data the identity holds to numerical precision
(the largest gap between $\partial\mathbb{E}[S]/\partial\beta$ and $\operatorname{Cov}(S,T)$ across graphs is
about $10^{-8}$), the intercept that holds the expected selection size fixed decreases monotonically (from about
$0$ at $\beta=0$ to about $-2.2$ at $\beta=2$ on BIRD), and the fitted intercept is negative ($-0.33$ on BIRD,
$-0.60$ on Spider). We refer to this effect as coupling-induced mean shift with intercept compensation, and use
the simulation study only to show that the pattern persists across the simulated graph regimes rather than to
define it.

\section{Empirical analysis}\label{sec:experiments}

The propositions make concrete, testable predictions. This section asks whether they hold on real schemas and
whether the full Bayesian treatment earns its added machinery, numbering the individual analyses (Analysis~1
onward) so the discussion can refer back to each. After fixing the estimators and metrics
(Section~\ref{sec:exp-data}), Section~\ref{sec:exp-recovery} tests Proposition~\ref{prop:window}, asking whether
coupling recovers the buried tables that connect the relevant ones and whether it relies on the real foreign-key
graph to do so (Analyses~\ref{exp:recovery}--\ref{exp:phase}). Section~\ref{sec:exp-calib} turns to the cost
Proposition~\ref{prop:intercept} predicts, where a fixed coupling makes the probabilities overconfident, fitting
the intercept removes that bias, and the value of the fitted posterior shows up on the full joint distribution
rather than the marginals (Analyses~\ref{exp:calib}--\ref{exp:joint}). Section~\ref{sec:sim} then asks how far
the pattern generalises beyond these two benchmarks, showing that it reproduces on synthetic graphs, that the
fitted model reports how much to trust the graph in each database along with how sure it is, and that the
structure transfers, if only partly, to databases never seen in training (Analyses~\ref{exp:sim}--\ref{exp:lodo}).
Two checks on the approximations, that the Laplace fit matches HMC and that the conclusions hold under different
prior scales, are in Appendix~\ref{app:checks}.

\subsection{Data, estimators, and metrics}\label{sec:exp-data}

\subsubsection{Data and evaluation protocol}

Our two datasets are the standard cross-domain text-to-SQL benchmarks. Spider \citep{yu2018spider} contains
roughly ten thousand human-written natural-language questions paired with reference SQL queries over two hundred
relational databases spanning 138 domains, with database-level train/test separation so that databases seen
during training do not reappear at test time. BIRD \citep{li2023bird} is larger, with over twelve thousand
question--SQL pairs over ninety-five real-world databases drawn from 37 professional domains such as healthcare,
finance, and sports. Its schemas are generally bigger and its joins more complex, so schema selection matters
more there. Each example pairs a natural-language question with a reference (\emph{gold}) SQL query that an expert wrote to answer it against one database.
We take the \emph{gold table set} of an example to be the tables the reference query reads from, parsed from its
\texttt{FROM} and \texttt{JOIN} clauses, and this set is the target our model tries to recover. We keep the
multi-table questions, those whose gold set spans at least two tables and whose schema is small enough for exact
enumeration. This leaves 639 questions over 8 databases from BIRD and 459 over 20 databases from Spider.

Unless stated otherwise, all reported predictions are out of sample under a two-fold cross-fit. We split the
queries into two folds at random by query, so each database appears in both folds. This is the database-seen
setting, as distinct from the leave-one-database-out protocol of Analysis~\ref{exp:lodo}. For each fold in
turn, the unary logistic weights, the feature standardisation, and the coupling $\beta$ are all fit on the other
fold, with $\beta$ chosen there to maximise recall. The held-out fold is used only for evaluation, so no query is
scored with parameters fit on itself. We distinguish two categories of hard-to-select gold table. A weakly-signalled
table is one the independent score misses, ranked below the gold set size and scored below one half. It is missed
by construction, and coupling's task is to recover it. A structural connector is a gold table that is a
cut vertex of the gold-induced subgraph, so that removing it disconnects the remaining gold tables, the junction
role of Figure \ref{fig:concept}. The two categories overlap only partly, since a connector may be individually
salient and a weakly-signalled endpoint need not be a connector, and we report recovery for each. Across the
tables that follow, recall and recovery use the top-$|\R|$ ranking, calibration and the proper scores (Brier and
log) use the marginal inclusion probabilities, and set precision uses the $m_t>\tfrac12$ selection. Calibration is
measured by the expected calibration error, which captures how far the stated probabilities sit from the
frequencies they should match. Recall at the gold set size assumes we already know how many tables to keep ($|\R|$), so it measures ranking
quality rather than a rule one could deploy, while the threshold precision instead evaluates the model's own
variable-size selection.

\subsubsection{Estimators}\label{sec:exp-methods}

The methods below share the same scoring machinery and differ only in how much of the field they estimate and
whether they let it vary by database. A \emph{fixed} unary is a cross-validated logistic fit on the query--table
features, held constant while the field is estimated. An \emph{estimated} unary gives the weights $\theta$ a
prior and infers them jointly. We use these names throughout.
\begin{itemize}
\item \textbf{Cosine}: the raw query--table cosine similarity, with no learned weights and no coupling. A pure
similarity baseline.
\item \textbf{Independent}: a fixed unary logistic score with the coupling switched off ($\beta=0$, and the scale
fixed at $\alpha=0$, $\gamma=1$). Tables are scored one at a time, as in standard schema linking, and this is our
main baseline.
\item \textbf{Shortest-path}: a deterministic path-finding selector in the style of SchemaGraphSQL, which takes
the confidently scored tables and adds the tables lying on shortest foreign-key paths between them. It uses the
graph but returns no probabilities.
\item \textbf{Coupled}: the Independent unary fit with a single global coupling $\beta$ turned on, chosen to
maximise held-out recall. The simplest model that can recover weakly-signalled tables, but its coupling is not
itself fitted to the gold sets.
\item \textbf{Global fit}: the global field $(\alpha,\beta)$ estimated together by maximum likelihood on the gold
sets, so the intercept is now free to compensate for the coupling.
\item \textbf{Hierarchical MAP}: the same field with per-database intercepts and couplings, fitted as posterior
modes under a prior that pools them toward the global values, then plugged into the likelihood.
\item \textbf{Hierarchical Bayes}: the full model of Section~\ref{sec:hier}, with $\theta$ given a prior and
inferred jointly with database-level random effects on $\alpha$ and $\beta$. Predictions integrate a Laplace
approximation to the parameter posterior, and this is the method we advocate.
\end{itemize}

\subsection{Recovery: does coupling find the right tables?}\label{sec:exp-recovery}

\expt{Weakly-signalled recovery}{exp:recovery}
Table \ref{tab:selectors} compares the selectors on how well they find the weakly-signalled tables. The coupled
field and the fitted global and hierarchical versions all improve recall over the independent selector
and recover a substantial fraction of the weakly-signalled tables it misses (up to 33\% on BIRD and 36\% on Spider). A deterministic shortest-path baseline in the style of SchemaGraphSQL (Table \ref{tab:selectors}) trades recall
for precision. It has the lowest recall of any method and recovers few weakly-signalled tables, at
precision comparable to the independent selector and well above the coupled posterior. It is a precision-oriented
method that does not match the posterior on recall and provides no calibrated probabilities.

\begin{table}[h]\centering
\small
\caption{Selection quality, cross-fit, computed by a single pipeline (the field methods as in
Tables~\ref{tab:calib}--\ref{tab:proper}). Recall and weakly-signalled recovery at the top-$|\R|$ ranking;
precision of the $m_t>\tfrac12$ selection; connector recovery (Conn.) is the fraction of gold articulation
vertices in the top-$|\R|$ ranking. For the hierarchical model, point-selection metrics use the hierarchical MAP
fit, and the full posterior predictive is evaluated as a probability distribution in Tables~\ref{tab:calib}--\ref{tab:joint}.}\label{tab:selectors}
\begin{tabular}{lcccc@{\hskip 1.5em}cccc}
\toprule
& \multicolumn{4}{c}{BIRD} & \multicolumn{4}{c}{Spider}\\
\cmidrule(lr){2-5}\cmidrule(lr){6-9}
Selector & Recall & Recov. & Prec. & Conn. & Recall & Recov. & Prec. & Conn.\\
\midrule
Independent ($\beta{=}0$)  & 0.782 & 0\%  & \textbf{0.851} & 71\% & 0.898 & 0\%  & 0.880 & 99\%\\
Shortest-path              & 0.665 & 6\%  & 0.843 & 50\% & 0.837 & 16\% & 0.874 & 68\%\\
Coupled, fixed $\beta$     & 0.805 & \textbf{33\%} & 0.726 & 84\% & 0.911 & 20\% & 0.799 & 100\%\\
Global fit      & 0.806 & 16\% & 0.846 & 79\% & 0.918 & 18\% & 0.871 & 100\%\\
Hierarchical MAP     & \textbf{0.822} & 31\% & 0.824 & \textbf{87\%} & \textbf{0.930} & \textbf{36\%} & \textbf{0.888} & 100\%\\
\bottomrule
\end{tabular}
\end{table}

\expt{Edge-shuffle ablation}{exp:shuffle}
A weakly-signalled table is defined by the unary model missing it, so we test whether its recovery uses the real
foreign-key structure by re-running the model with the graph edges shuffled and the field parameters held at
their real-graph values. Replacing the graph by a random or degree-preserving shuffle collapses weakly-signalled
recovery from 16--18\% to roughly 7--9\%, and the real-edge value lies above the entire 60-shuffle range, a
one-sided randomisation comparison, so the gain is structural rather than an effect of refitting
(Appendix~\ref{app:extra}, Table~\ref{tab:shuffle}). Weakly-signalled tables are common. They occur as 279 gold-table instances across the 639
BIRD queries and 50 across the 459 Spider queries.

\expt{Connector recovery}{exp:connector}
The edge-shuffle result speaks to weakly-signalled tables in general. We can also single out the connector
tables, the articulation vertices of the gold-induced subgraph (the junctions of Figure \ref{fig:concept}). There
are 168 of them across the BIRD queries and 68 across Spider, and they are
mostly a different group from the weakly-signalled tables. A connector is often easy to score on its own, and
only 28\% of BIRD connectors and 1\% of Spider connectors are also weakly-signalled. The connector column of
Table \ref{tab:selectors} reports how often each method places these tables in its top-$|\R|$ ranking, the same
rule used for recall. On BIRD, where many connectors are only moderately salient, coupling lifts connector recovery from 71\%
for the independent ranking to 87\% for the hierarchical fit, with database-clustered 95\% bootstrap intervals of
$[59,87]$ and $[72,95]$. On Spider the independent ranking already recovers 67 of the 68 connectors, so little is
left to gain and the fitted models reach 100\%. The gain is therefore larger on BIRD, where more connectors are
only moderately salient.

\expt{Recovery--precision tradeoff}{exp:phase}
Figure \ref{fig:phase} sweeps the coupling. Recall and recovery rise while precision falls, and the fitted
coupling sits where recovery is high before precision collapses. The aggregate curve is consistent with the
tradeoff Proposition~\ref{prop:window} predicts, recovery rising once coupling is strong enough to rescue weak
tables and precision declining once it grows too large, though it aggregates over queries rather than checking
each query's own window.

\subsection{Calibration and the joint distribution}\label{sec:exp-calib}

\expt{Calibration and proper scores}{exp:calib}
Table \ref{tab:calib} reports marginal calibration in the cross-fit (database-seen) setting. The raw coupled
posterior at a fixed coupling is badly overconfident (calibration error 0.142 on BIRD, 0.109 on Spider, against
0.024 and 0.040 for the independent selector). The reliability curves in Figure \ref{fig:reliability} cross
below the diagonal, so the probabilities are systematically too high, not merely too sharp. Temperature scaling
rescales the logits by a common factor and preserves the zero-logit boundary. It therefore cannot reproduce the
additive intercept shift the bias requires. Temperature scaling confirms this, leaving the error essentially unchanged (0.142 to 0.162 on BIRD;
Appendix~\ref{app:extra}), whereas Proposition \ref{prop:intercept} identifies the intercept as the correct additive correction, a constant subtracted from every
score. The fitted global field applies this correction, reducing the calibration error to the independent level
(0.022 on BIRD, 0.023 on Spider) while retaining the recall gain, and the fitted intercept comes out negative.
The hierarchical Bayesian posterior predictive is likewise well calibrated (error 0.024 and 0.033) while
integrating parameter uncertainty. Post-hoc Platt scaling reaches similar calibration but is an external
correction that touches only the marginals (Appendix~\ref{app:extra}).

\begin{table}[h]\centering
\caption{Marginal calibration (expected calibration error), cross-fit (database seen); lower is better. Recall
and weakly-signalled recovery for these methods are in Table \ref{tab:selectors}.}\label{tab:calib}
\begin{tabular}{lcc}
\toprule
Method & BIRD & Spider\\
\midrule
Independent ($\beta{=}0$)            & 0.024 & 0.040\\
Coupled, fixed $\beta$               & 0.142 & 0.109\\
Global fit                & 0.022 & \textbf{0.023}\\
Hierarchical MAP               & \textbf{0.014} & 0.028\\
Hierarchical Bayes (predictive)      & 0.025 & 0.034\\
\bottomrule
\end{tabular}
\end{table}

\begin{figure}[t]\centering
\includegraphics[width=0.49\linewidth]{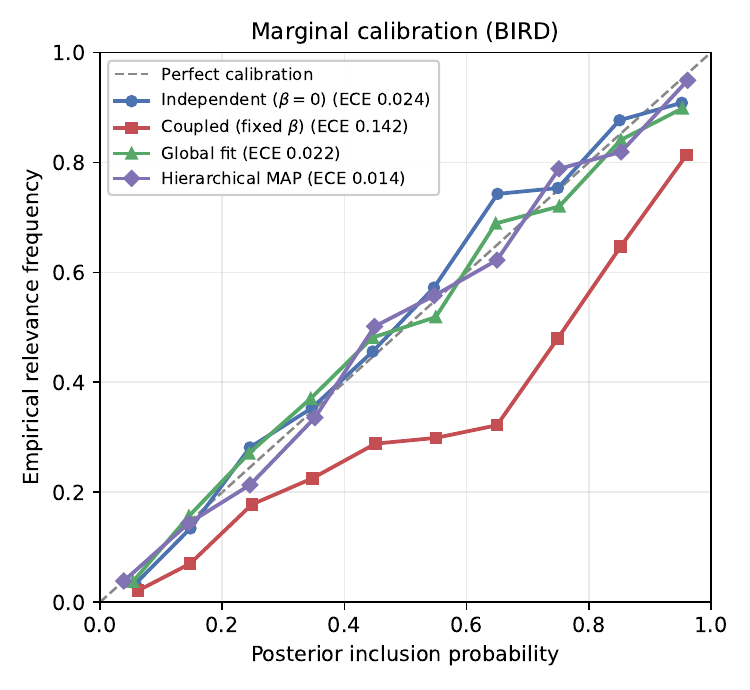}\hfill
\includegraphics[width=0.49\linewidth]{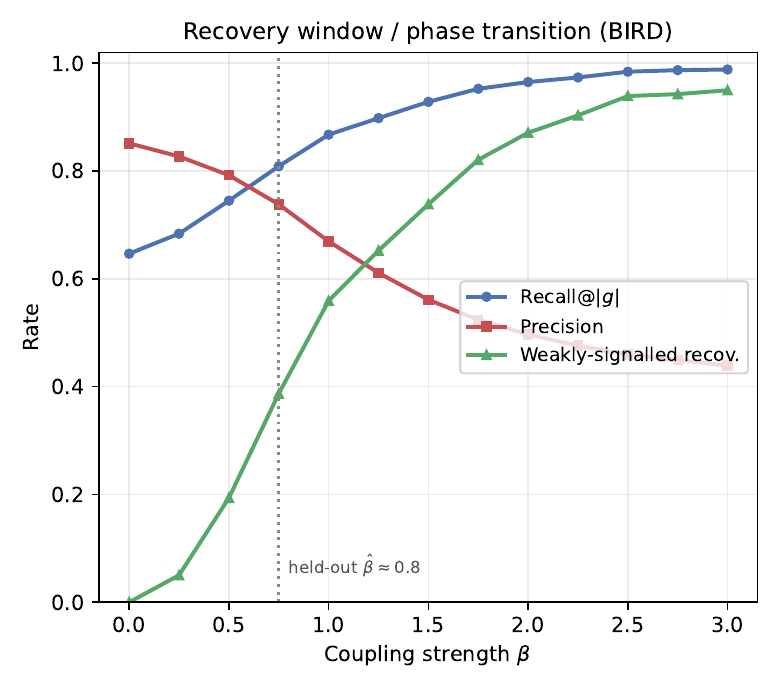}
\caption{Left: marginal reliability on BIRD. The coupled field at a fixed coupling falls below the diagonal
(overconfident, biased), while the global and hierarchical MAP fits lie on it. The curve shown for
the hierarchical model is the MAP fit (calibration error 0.014), and the full posterior predictive is
comparable (0.024). Right: sweeping the coupling gives the recovery window of Proposition \ref{prop:window},
recall and recovery rising as precision falls.}\label{fig:reliability}\label{fig:phase}
\end{figure}

Table \ref{tab:proper} reports two proper scoring rules, the Brier score and the log score, alongside the
calibration error, with 95\% intervals from a bootstrap that resamples whole databases with replacement,
since queries in the same database are not independent. The hierarchical MAP fit has the lowest Brier and log score on both datasets,
with the full posterior predictive nearly tied, so fitting improves overall probabilistic accuracy, not only
calibration. The calibration error is insensitive to the bin count (10 against 15 bins
moves it by at most 0.01). The clustered intervals are wide, because there are only 8 databases in BIRD and 20
in Spider, so the difference in calibration error between the independent selector and the fitted methods is not
individually resolved on either dataset. The intervals do separate the fixed-coupling model from the fitted
models, even where they leave the independent-versus-fitted gap unresolved. The full hierarchical Bayesian
posterior predictive, which integrates the parameter posterior rather than plugging in a point estimate, matches
the hierarchical MAP on these marginal proper scores rather than beating it. Posterior integration does not
improve these aggregate scores. Its contribution is the interval estimates it provides for the database-specific
couplings (Analysis~\ref{exp:coupling}). Because the clustered intervals are wide, the evidence for improved
calibration comes mainly from the consistent ordering across the two datasets and the simulation below, rather
than from any single dataset's gap.

\begin{table}[t]\centering
\caption{Calibration error, Brier score, and log score, cross-fit, with database-clustered 95\% intervals for
the headline comparison. Lower is better throughout.}\label{tab:proper}
\begin{tabular}{lccc@{\hskip 2em}ccc}
\toprule
& \multicolumn{3}{c}{BIRD} & \multicolumn{3}{c}{Spider}\\
\cmidrule(lr){2-4}\cmidrule(lr){5-7}
Method & ECE & Brier & Log & ECE & Brier & Log\\
\midrule
Independent ($\beta{=}0$) & 0.024 & 0.133 & 0.417 & 0.040 & 0.127 & 0.406\\
Coupled, fixed $\beta$    & 0.142 & 0.171 & 0.530 & 0.109 & 0.146 & 0.467\\
Global fit                & 0.022 & 0.134 & 0.421 & 0.023 & 0.129 & 0.412\\
Hierarchical MAP    & \textbf{0.015} & \textbf{0.118} & \textbf{0.376} & 0.029 & \textbf{0.102} & \textbf{0.326}\\
Hierarchical Bayes (predictive) & 0.025 & 0.121 & 0.381 & 0.034 & 0.102 & 0.335\\
\bottomrule
\end{tabular}\\[2pt]
{\footnotesize The first four rows are point-estimate field fits. The last is the full hierarchical
Bayesian posterior predictive of Section \ref{sec:hier}, integrating a Laplace approximation to the parameter
posterior. Clustered 95\% intervals (BIRD): hierarchical ECE $[0.009,0.031]$, coupled ECE $[0.051,0.237]$,
hierarchical log $[0.312,0.475]$, coupled log $[0.403,0.646]$.}
\end{table}

\expt{The joint posterior predictive}{exp:joint}
The metrics so far are marginal, table by table. The method's actual output is a distribution over whole
subgraphs, so we evaluate that distribution directly. Table \ref{tab:joint} reports the log
posterior-predictive probability of the exact gold subgraph, the exact-subgraph maximum a posteriori accuracy,
the expected Hamming loss, and the size and coverage of the 90\% posterior predictive set, for the coupled model
as a hierarchical MAP and as the full posterior predictive, against the same model with the coupling forced to
zero.

Most of the improvement here comes from coupling. Both coupled versions assign much higher probability to the
exact gold subgraph than the $\beta=0$ model and recover it exactly far more often (MAP accuracy 23\% against
16\% on BIRD, 53\% against 45\% on Spider), the joint advantage the marginal proper scores did not reveal.
Integrating the parameter posterior adds little to these point metrics. The full predictive matches the
hierarchical MAP on log-probability and Hamming loss rather than beating it, so on the joint distribution, as on
the marginals, its value is not sharper point prediction.

The clearest difference is in predictive-set coverage. We form a \emph{posterior predictive set} for each query by listing the
candidate table subsets from most to least probable and keeping them until their probabilities reach $0.90$. The
$\beta=0$ model is overconfident on the joint, covering only 0.82 of gold subgraphs on BIRD and 0.89 on Spider at
a nominal 0.90. Coupling restores near-nominal coverage (0.90 to 0.92) with informative sets, mean size about 40
configurations on BIRD and 6 on Spider, far below the $2^{|V|}$ total. The full predictive is the most
conservative, with slightly larger sets and the highest coverage. Good marginal calibration therefore does not
guarantee nominal coverage of complete-subgraph sets. The coupled model restores near-nominal joint coverage,
while integrating the parameter posterior gives the most conservative predictive sets.

\begin{table}[h]\centering
\small
\setlength{\tabcolsep}{4pt}
\caption{Direct evaluation of the joint distribution: log posterior-predictive probability of the exact gold
subgraph, exact-subgraph MAP accuracy (MAP), expected Hamming loss (Ham.), and the mean size and coverage (cov.)
of the 90\% posterior predictive set. All three rows use the cross-fit hierarchical model; the independent row
forces $\beta=0$, the MAP row plugs in the posterior mode, and the predictive row integrates the parameter
posterior.}\label{tab:joint}
\begin{tabular}{lccccc@{\hskip 1em}ccccc}
\toprule
& \multicolumn{5}{c}{BIRD} & \multicolumn{5}{c}{Spider}\\
\cmidrule(lr){2-6}\cmidrule(lr){7-11}
Method & log-pred & MAP & Ham. & set & cov. & log-pred & MAP & Ham. & set & cov.\\
\midrule
Independent ($\beta{=}0$)  & $-3.15$ & 16.0\% & 1.75 & 26 & 0.82 & $-1.79$ & 45.3\% & 1.10 & 6 & 0.89\\
Hierarchical MAP           & $\mathbf{-2.73}$ & \textbf{23.0\%} & \textbf{1.75} & 38 & 0.90 & $\mathbf{-1.51}$ & \textbf{52.7\%} & \textbf{0.95} & 6 & 0.90\\
Hierarchical Bayes (pred.) & $-2.74$ & 21.9\% & 1.77 & 40 & 0.91 & $-1.55$ & 50.5\% & 0.98 & 6 & 0.92\\
\bottomrule
\end{tabular}
\end{table}

\subsection{Generality and robustness}\label{sec:sim}

\expt{Simulation study}{exp:sim}
A simulation study checks that the mechanism Proposition \ref{prop:intercept} identifies reproduces under
controlled graph density and heterogeneity, rather than being specific to BIRD and Spider. Each
synthetic database is a random connected graph on eight nodes at a target edge density, built from a random
spanning tree plus additional random edges. Each query draws a connected relevant set of size four. A fraction
of its relevant nodes are planted as weakly-signalled, with unary evidence a fixed amount below threshold, while
the remaining relevant nodes and the irrelevant nodes draw Gaussian evidence separated by a signal gap, and
irrelevant nodes adjacent to the relevant set receive a proximity bump so that distractors are structurally
plausible. We fit a calibrated unary model on the evidence, then apply the independent selector and the coupled,
global, and hierarchical field methods exactly as on the real data. Between-database heterogeneity is
introduced by drawing each database's density around the target with a spread parameter. Every condition is run
for 20 independent replications. Table \ref{tab:sim} reports the mean and a 95\% interval across replications.

\begin{table}[t]\centering
\caption{Simulation study, 20 replications per condition. Calibration error for the independent, fixed-coupling,
and hierarchical methods, and weakly-signalled recovery for the fixed-coupling and hierarchical methods.}
\label{tab:sim}
\begin{tabular}{lcccc@{\hskip 1em}ccc}
\toprule
& \multicolumn{4}{c}{ECE} & \multicolumn{3}{c}{Recovery}\\
\cmidrule(lr){2-5}\cmidrule(lr){6-8}
Condition & Unary & Fixed $\beta$ & Global & Hier. & Fixed $\beta$ & Global & Hier.\\
\midrule
Baseline (density 0.35)      & 0.074 & 0.420 & 0.034 & 0.033 & 50\% & 32\% & 32\%\\
Sparse (density 0.25)        & 0.068 & 0.358 & 0.042 & 0.041 & 52\% & 49\% & 50\%\\
Dense (density 0.70)         & 0.093 & 0.498 & 0.092 & 0.091 & 42\% & \phantom{0}0\% & \phantom{0}1\%\\
Heterogeneous (spread 0.35)  & 0.079 & 0.422 & 0.052 & 0.058 & 50\% & 14\% & 25\%\\
Few queries ($n_q{=}15$)     & 0.080 & 0.420 & 0.055 & 0.058 & 51\% & 17\% & 23\%\\
\bottomrule
\end{tabular}\\[2pt]
{\footnotesize ``Global'' is the global fitted field (maximum likelihood); ``Hier.'' is the hierarchical fit.
Representative 95\% intervals across replications: hierarchical ECE within about $\pm 0.01$; recovery
$[16,34]$ (hierarchical) against $[7,21]$ (global) at spread 0.35.}
\end{table}

Three patterns from the real data recur, each with its replication uncertainty. First, the calibration pattern of Table
\ref{tab:calib} reproduces in every condition: the independent selector is calibrated, the coupled posterior at
a fixed coupling is overconfident (ECE 0.36 to 0.50, with intervals narrower than 0.02), and fitting returns the
error to the independent level, so the mechanism is not specific to the two datasets. Second, recovery depends on graph density. In sparse graphs coupling recovers about half of the planted weakly-signalled nodes. In dense graphs
recovery collapses to essentially zero (1\%, interval $[0,2]$), because weakly-signalled nodes become
indistinguishable from spurious connections and the fit responds by reducing the coupling, so that calibration
is preserved (fitted ECE 0.091 against the independent 0.093). The fitted model keeps substantial coupling in
sparse graphs but shrinks the coupling nearly to zero in dense ones. Third, the hierarchical coupling helps most under heterogeneity. When
databases are identical the hierarchical and global fits are indistinguishable (32\% recovery each). When they
differ, the hierarchical fit recovers about twice as many weakly-signalled nodes (25\%, interval $[16,34]$,
against the global 14\%, interval $[7,21]$) at comparable calibration, and the same holds when there are few
queries per database, where partial pooling borrows strength across databases.

\begin{figure}[t]\centering
\includegraphics[width=\linewidth]{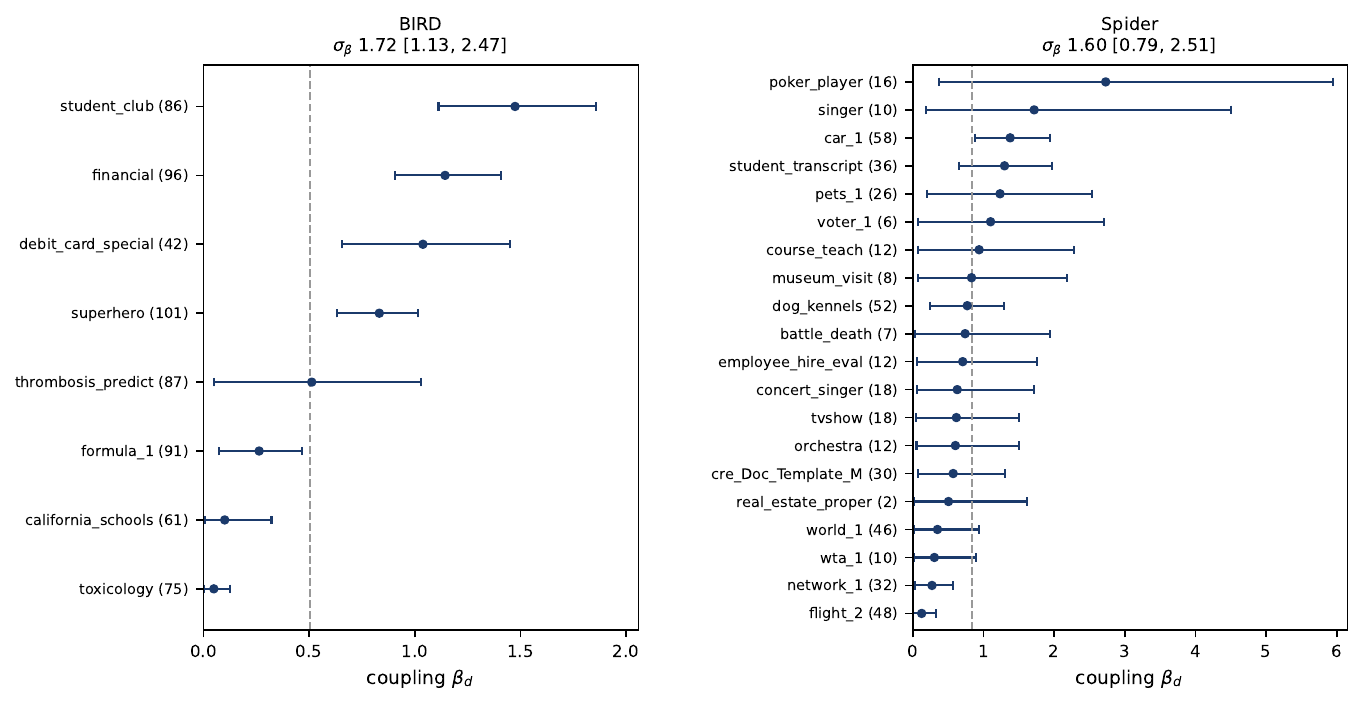}
\caption{The posterior over how much to trust the graph in each database, $\beta_d$, from the
full hierarchical model (Hamiltonian Monte Carlo, a descriptive fit on all queries rather than the cross-fitted
evaluation used elsewhere). Each row is a database, with its query
count in parentheses. The dot is the posterior mean and the bar the 90\% credible interval, and the dashed line
marks the population coupling. The couplings differ widely across databases, and the intervals widen for
databases seen in fewer queries.}\label{fig:coupling}
\end{figure}

\expt{Uncertainty over how much to trust each graph}{exp:coupling}
The simulation shows that a hierarchical coupling helps when databases differ. To see whether these databases
differ, and with what certainty, we fit the full model to all queries by Hamiltonian Monte Carlo and examine the
posterior over each database's coupling $\beta_d$, that is, over how much to trust its foreign-key graph.
Figure~\ref{fig:coupling} shows these posteriors. The couplings range widely, from databases with strong graph
reliance (BIRD's \texttt{student\_club}, $\beta_d\approx1.4$) to databases where the coupling is near zero and
tables are scored on their own evidence (\texttt{toxicology}, $\beta_d\approx0.05$).
The posterior for the population spread $\sigma_\beta$ concentrates near 1.6 to 1.7, and its 90\% interval has a
lower end of about 1.1 on BIRD and 0.8 on Spider, concentrated away from zero and consistent with substantial
between-database heterogeneity rather than merely extra parameters. The intervals are also wider for databases
with fewer queries. On Spider the width of a database's interval falls with its number of queries (correlation $-0.46$), so a database
seen in only a handful of queries, such as \texttt{poker\_player} with sixteen, gets a wide interval and is
pulled toward the population coupling, while a data-rich database such as \texttt{car\_1} gets a tight one. These
posterior intervals quantify uncertainty in the database-specific coupling, which a plug-in estimate cannot
provide.

\expt{Unseen databases}{exp:lodo}
Table \ref{tab:lodo} evaluates a genuinely unseen database with a full leave-one-database-out protocol, in which
both the unary weights and the field are trained on the other databases and the held-out database is predicted
from the population effect. The structural benefit transfers, since the model recovers about 15\% of
weakly-signalled tables against zero for the independent selector while recall improves. Calibration transfers only partially, and
substantially better on Spider than on BIRD (0.084 against the independent 0.080 on Spider, versus 0.113 against
0.074 on BIRD). Spider supplies more training databases (19 against 7), one plausible explanation consistent with
the partial-pooling mechanism, though the two benchmarks differ in other respects as well. Averaging over the coupling's
uncertainty (the full predictive) makes the marginals more overconfident, not less, and scores worse on
both proper scores (Table \ref{tab:lodo}). In these fits, integrating over the right-skewed distribution of the
positive coupling places additional weight on relatively large values of $\beta_d$, raising the marginal
inclusion probabilities rather than hedging them. On an unseen database the plain
population plug-in is the better predictor on every metric we report, a more guarded conclusion than the
database-seen setting supports.

\begin{table}[t]\centering
\small
\setlength{\tabcolsep}{4pt}
\caption{Leave-one-database-out (database unseen): unary weights and field both trained on the other databases,
held-out database predicted from the population effect. Calibration error, Brier score, log score, and recall;
lower is better for the first three. Weakly-signalled recovery (omitted) is 13--17\% for the unseen conditions
against 0\% for the independent selector.}\label{tab:lodo}
\begin{tabular}{lcccc@{\hskip 1.5em}cccc}
\toprule
& \multicolumn{4}{c}{BIRD} & \multicolumn{4}{c}{Spider}\\
\cmidrule(lr){2-5}\cmidrule(lr){6-9}
Condition & ECE & Brier & Log & Recall & ECE & Brier & Log & Recall\\
\midrule
Independent ($\beta{=}0$)   & 0.074 & 0.148 & 0.461 & 0.773 & 0.080 & 0.151 & 0.468 & 0.895\\
Unseen, population plug-in  & 0.113 & 0.157 & 0.486 & 0.796 & 0.084 & 0.151 & 0.483 & 0.915\\
Unseen, full predictive     & 0.178 & 0.182 & 0.547 & 0.799 & 0.113 & 0.178 & 0.535 & 0.908\\
\bottomrule
\end{tabular}
\end{table}

\section{Discussion}\label{sec:disc}

Graph coupling improves recovery of structurally supported tables but raises the marginal inclusion
probabilities, and the right response is to fit the model so that joint estimation restores calibration rather
than to abandon the coupling. A hierarchical coupling does this best when databases are heterogeneous, and the
fitted posterior reports a different coupling for each database together with its uncertainty
(Analysis~\ref{exp:coupling}). The stability results explain the tradeoff. Coupling helps when the structural
support for weak relevant tables outweighs the support it lends distractors, and too much coupling over-selects
once that separation disappears, as the density sweep confirms.

Several limitations remain. The model is a conditional (autologistic) model of the selection given the evidence
and the graph. It does not specify a generative distribution for the features. It is not a way to raise
end-to-end execution accuracy, and a deterministic shortest-path baseline attains higher precision at lower
recall. Proposition \ref{prop:window} characterises conditional stability, a necessary condition for the global
posterior mode, not a contraction result. Exact enumeration of the conditional likelihood and subgraph
distribution relies on the schema graphs being small, which holds here but would require approximate inference on
very large schemas, and the parameter posterior is a Laplace approximation, verified against Hamiltonian Monte
Carlo (Appendix~\ref{app:checks}) to be adequate for the reported quantities but understating uncertainty in the
individual database random effects. On a genuinely unseen database calibration transfers only partially, and
better on Spider, which has more training databases, though the benchmarks differ in other ways. The calibration
finding is established here for foreign-key-coupled schema selection and, by simulation, for graph-coupled
selection more broadly. A general theory of when fitting restores calibration in coupled selection models is
left open.

Large language models are now the default way to turn a question into a query, but the decisions around them,
which tables are relevant, which documents to retrieve, and when to trust an answer, are decisions under uncertainty. This is familiar ground for Bayesian modelling. The model here is one concrete example. It sits
inside a text-to-SQL pipeline, reads the same cheap features the language model already produces, and returns a
calibrated distribution over which schema subgraph to keep. It combines the foreign-key graph, through the
conditional model, with database-level random effects, and reports a distribution rather than a single guess, so
a later step can defer to the user or ask for confirmation when the evidence is thin. It neither replaces nor
retrains the language model but wraps a structured probability model around it.

The construction is not specific to schema linking. Wherever a system must select a small connected substructure
from noisy per-item evidence, and calibrated uncertainty over that structure matters, the same components apply.
Item scores serve as unary evidence, a coupling encodes the known relationships, and a hierarchy lets the coupling
adapt across contexts. Retrieval-augmented generation choosing a connected set of passages in a knowledge graph,
entity linking over a relational store, variable selection on a known dependency graph, and join-path selection in
query optimisation all share this shape, with existing model scores entering as covariates in a structured
Bayesian selection model. That such a model can be fit exactly, and that it complements rather than competes with
a black-box language model, is what we take from the schema-linking case.

Two extensions seem most worthwhile. The first is a richer coupling. The pairwise foreign-key term here is a
single attractive scalar, and the same hierarchical fit could carry more structure. Natural options are couplings
that vary with an edge's type or features rather than sharing one value \citep{peterson2015bayesian}, higher-order
or multi-state interactions in the style of a Potts model \citep{moores2020scalable}, and repulsive priors such as
a determinantal point process \citep{kulesza2012determinantal} for settings where the relevant tables should be
spread across the graph rather than clustered. Proposition~\ref{prop:intercept} governs the calibration of the
attractive scalar coupling, and whether the same intercept correction restores calibration under these
alternatives is open. The second extension is to move beyond relational schemas to other connected-substructure
retrieval, most directly graph-based retrieval-augmented generation, where the target is a connected set of
passages or entities in a knowledge or document graph \citep{subgraphrag,edge2024graphrag}. There the unary
evidence is passage--query relevance and the coupling is the graph's own edges, so little of the construction
would change.

\appendix
\section{Computational checks}\label{app:checks}

\paragraph{Validating the Laplace approximation.}
The likelihood is exact, but we approximate the posterior over the parameters $\Theta$ by a Gaussian (the Laplace
approximation), so we check it against Hamiltonian Monte Carlo, which targets the posterior without the Gaussian approximation. We compare the
two in coordinates rescaled so that a perfect Laplace approximation would look like a standard normal $\N(0,I)$.
How far the HMC samples stray from that standard normal measures the error directly. We run this check on a
held-out fold of each dataset. The population parameters $(\theta,\alpha_0,b_0,\log\sigma_\alpha,\log\sigma_\beta)$
stay close to the reference, and the aggregate quantities we actually report, recall and marginal calibration,
agree with HMC. On BIRD the calibration error is 0.021 and 0.024 across two independent HMC chains against 0.026
for Laplace and 0.020 for the maximum a posteriori estimate. On Spider it is 0.033 and 0.031 against 0.044 for
Laplace, with recall identical to two decimal places in every case. The one place the approximation is inaccurate
is the individual database random effects, whose true posterior is wider and shifted than the Gaussian allows
(the funnel-shaped posterior typical of hierarchical models). This did not materially change the aggregate
predictive marginals in the folds examined, but we flag it as a limitation of the parameter-level uncertainty.
We report results with 200 posterior draws, and increasing the count further produced negligible changes.

\paragraph{Prior sensitivity.}
Because the hierarchy is central and there are only 8 databases in BIRD, we check that the reported quantities do
not depend on the prior scales. Scaling every prior standard deviation by $\tfrac12$ and by $2$ (a four-fold
range) moves the posterior-predictive calibration error by at most 0.007, the log score by at most 0.005, and
recall by at most 0.003 on either dataset. The prior is broad relative to the fitted quantities rather than
tightly concentrated. Its implied coupling $\beta_d$ has median $0.65$ with a 90\% interval of $[0.01,5.1]$,
comfortably covering the fitted values, while the fitted posterior over selected-set sizes is far more
concentrated than the prior predictive.

\section{Additional results}\label{app:extra}

\paragraph{Post-hoc recalibration.}
Table~\ref{tab:posthoc} gives two post-hoc corrections applied to the fixed-coupling marginals. Temperature
scaling leaves the calibration error essentially unchanged, consistent with Proposition~\ref{prop:intercept},
since it rescales the logits rather than shifting them. Platt scaling reaches calibration comparable to the
fitted models but is an external correction that touches only the marginals.

\begin{table}[h]\centering
\caption{Post-hoc recalibration of the fixed-coupling marginals (expected calibration error, cross-fit); lower is
better.}\label{tab:posthoc}
\begin{tabular}{lcc}
\toprule
Method & BIRD & Spider\\
\midrule
Coupled, fixed $\beta$   & 0.142 & 0.109\\
Temperature (post-hoc)   & 0.162 & 0.108\\
Platt (post-hoc)         & 0.021 & 0.026\\
\bottomrule
\end{tabular}
\end{table}

\paragraph{Edge-shuffle ablation.}
Table~\ref{tab:shuffle} gives the full result summarised in Analysis~\ref{exp:shuffle}: the mean and
2.5--97.5th percentile range of recall and weakly-signalled recovery over 60 shuffles, for a random edge
permutation and a degree-preserving permutation, with the field parameters held at their real-graph values.

\begin{table}[h]\centering
\caption{Edge-shuffle ablation for the global fitted model: recall at the gold set size and weakly-signalled
recovery, with the foreign-key edges replaced by shuffled edges. Shuffled rows give the mean and the
2.5--97.5th percentile range over 60 shuffles.}\label{tab:shuffle}
\begin{tabular}{lcc@{\hskip 2em}cc}
\toprule
& \multicolumn{2}{c}{BIRD} & \multicolumn{2}{c}{Spider}\\
\cmidrule(lr){2-3}\cmidrule(lr){4-5}
Edges & Recall & Recovery & Recall & Recovery\\
\midrule
Real foreign keys        & \textbf{0.805} & \textbf{16.5\%} & \textbf{0.916} & \textbf{18.0\%}\\
Random permutation       & 0.777 & 7.1\% $[4,10]$ & 0.896 & 6.9\% $[2,12]$\\
Degree-preserving        & 0.774 & 7.7\% $[5,11]$ & 0.893 & 9.3\% $[3,16]$\\
\bottomrule
\end{tabular}
\end{table}

\paragraph{Hard-connected variant.}
Restricting the exact enumeration to configurations whose selected set is connected gives a hard-connected
variant of the model. At the fitted coupling it improves weakly-signalled recovery markedly (BIRD 16.5\% to
35.5\%, Spider 18\% to 32\%) at slightly higher recall (BIRD 0.805 to 0.828, Spider 0.916 to 0.922) and unchanged
precision. We nonetheless keep the unconstrained model primary for the reason given in Section~\ref{sec:coupling}, that the hard constraint cannot represent the 5 to 7\% of gold sets that are disconnected.

\paragraph{A worked example.}
Figure~\ref{fig:trace} traces the sequential reading of the model on one BIRD query.

\begin{figure}[h]\centering
\includegraphics[width=0.7\linewidth]{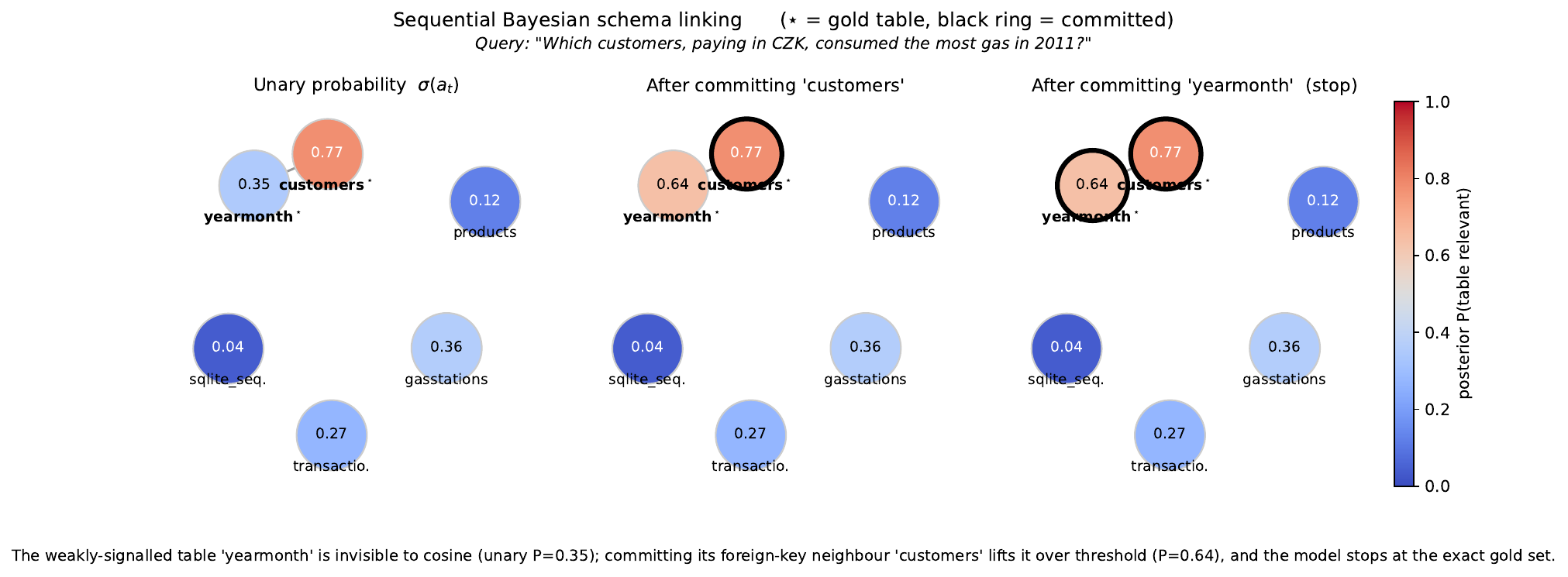}
\caption{Sequential reading of the model on one BIRD query. The weakly-signalled table is scored low by the
cosine (unary probability below one half) and is recovered once its foreign-key neighbour is committed, and the
model stops at the gold set.}\label{fig:trace}
\end{figure}

\bibliographystyle{plainnat}
\bibliography{refs}

\end{document}